\newcommand{\prob}[1]{\mathbf{Pr}[#1]}
\newcommand{\Prob}[1]{\mathbf{Pr}\bigg[#1\bigg]}
\newtheorem{theorem}{Theorem}
\newtheorem{corollary}[theorem]{Corollary}
\newtheorem{lemma}[theorem]{Lemma}
\newtheorem{definition}[theorem]{Definition}
\newcommand{\net}{\ensuremath{\mathfrak{N}}\xspace}	
\newcommand{\Exp}[1]{\mathbf{E}\left[#1\right]}
\newcommand{\cj}{\ensuremath{c}\xspace}
\title{Communicating with Beeps\footnote{Research partially supported by the Centre for Discrete Mathematics and its Applications (DIMAP).}}
\begin{document}
	

		
\author{\textbf{Artur Czumaj} \hspace{4mm} \textbf{Peter Davies} \\[0.10in]
	Department of Computer Science \\
	Centre for Discrete Mathematics and its Applications 
	\\
	University of Warwick}

\title{\textbf{Communicating with Beeps}
	\thanks{Research partially supported by the Centre for Discrete Mathematics and its Applications (DIMAP).}
	\thanks{Contact information: \{A.Czumaj, P.Davies.4\}@warwick.ac.uk. Phone: +44 24 7657 3796.}
}

\maketitle
		
\begin{abstract}
The \emph{beep model} is a very weak communications model in which devices in a network can communicate only via beeps and silence. As a result of its weak assumptions, it has broad applicability to many different implementations of communications networks. This comes at the cost of a restrictive environment for algorithm design.

Despite being only recently introduced, the beep model has received considerable attention, in part due to its relationship with other communication models such as that of ad-hoc radio networks. However, there has been no definitive published result for several fundamental tasks in the model. We aim to rectify this with our paper.

We present algorithms and lower bounds for a variety of fundamental global communications tasks in the model.
\end{abstract}

\section{Introduction}

The \emph{beep model}, introduced recently by Cornejo and Kuhn \cite{-CK10}, is a very weak network communications model in which information can be passed only in the form of a beep or a lack thereof. The model is related to the ad-hoc radio network model, and has been used as a surrogate model in results concerning radio networks with collision detection. As well as attracting study from this angle, the beep model is interesting in its own right because of its generality, simplicity, and wide range of areas where it could be applied.

Despite being only recently introduced, the beep model has received considerable attention, in part due to its relationship with other communication models such as that of ad-hoc radio networks. However, there has been no definitive published result for several fundamental communication tasks in the model. We aim to rectify this with our paper.

\subsection{Model}

The network is modeled as an undirected connected graph $G=(V,E)$, where vertices in the graph represent devices in the network, and edges represent direct reachability. Time is divided into discrete steps, with a synchronized global clock. In each time-step every node decides whether to \emph{beep} or to \emph{listen}. Nodes which choose to listen in a particular time-step hear a beep if at least one of their neighbors chose to beep, and they cannot distinguish between one neighbor beeping or many. We will assume that nodes have unique labels (IDs), which is essential (at least when considering deterministic algorithms) in order to break symmetry.

We will use the following parameters in analysis of our algorithms:
\begin{itemize}
\item $n$ will denote network size, i.e., $|V|$.
\item $D$ will denote network diameter, the largest distance between any pair of nodes.
\item $L$ will be the range of node labels, i.e., labels will be strings of no more than $\log L$ bits.
\item $M$ will be the range of messages, i.e., messages will be strings of no more than $\log M$ bits.
\item $k$ will be the number of source nodes when considering the multi-broadcast task.
\end{itemize}

We do not, however, assume that nodes have any prior knowledge of these parameters, nor any other knowledge about the network.

\subsection{Related Work}

There has been a large amount of research focusing on fundamental communication problems in distributed computing, see e.g., \cite{-Pel07} and the references therein. The beep model was introduced by Cornejo and Kuhn \cite{-CK10}, who used it to design an algorithm for interval coloring. This task is a variant of vertex coloring used in resource allocation problems, and is, in a sense, tailored to the model. In another recent work, Afek et al.\ \cite{-AABCHK11} presented an algorithm for finding a maximal independent set in the beep model, and an algorithm for the related problem of minimum connected dominating set is given in \cite{-YJYLC15}.

The beep model is strictly weaker than the model of radio networks with collision detection (see, e.g., \cite{-Pel07}), and so algorithmic results in the former also apply in the latter. This relationship was exploited by Ghaffari and Haeupler \cite{-GH13} to give almost optimal $O((D+\log n \log\log n) \cdot \min \{\log\log n, \log \frac nD\})$-time randomized algorithm for leader election in radio networks with collision detection. Ghaffari and Haeupler \cite{-GH13} also introduce the method of ``beep waves'' to transmit bit strings, a method which is also employed here for the purpose of broadcast. Czumaj and Davies \cite{-CD16} give a different randomized leader election algorithm which achieves optimal $O(D+\log n)$ expected running time at the cost of slower worst-case time. Ghaffari et al.\ \cite{-GHK13} give a randomized broadcast algorithm in radio networks with collision detection which employs beeping techniques, but, unlike the algorithm of \cite{-GH13}, does not entirely translate over to the beep model.

A deterministic leader election algorithm in the beep model was given by F\"{o}rster et al.\ \cite{-FSW14}, taking $O(D \log L)$ time, and a very recent result by Dufoulon, Burman and Beauquier \cite{-DBB18} improves this to an optimal $O(D+\log L)$ time.   In another related work, Gilbert and Newport \cite{-GN15} studied the quantity of computational resources needed to solve specific problems in the beep model.

Concurrently with this work, Hounkanli and Pelc \cite{-HP15} give a $O(D+\log M)$ time broadcasting algorithm and an $O(n^2\log M + nD \log L)$-time gossiping algorithm in a slightly different model where nodes know network parameters $n, L, M$ but wake-up at arbitrary different time-steps, rather than simultaneously. To our knowledge there have been no earlier published results for broadcast, gossiping, and multi-broadcast in the model we study. Further recent works explore other tasks in various related beeping models (\cite{-BKKPW16,-CMT17,-HMP16,-HP16}).

\subsection{Our Results}

Our aim here is to provide the first comprehensive study of global communication algorithms in the beep model. Recall that $n$ denotes number of nodes, $D$ diameter, $M$ message range, $L$ label range, and $k$ number of sources for multi-broadcast. We present the following results:

\begin{itemize}
	\item An optimal $O(D+\log M)$-time algorithm for broadcasting a $\log M$ bit message, developing and formalizing the ``beep waves'' method of \cite{-GH13}.
	\item A corresponding  $\Omega(D+\log M)$ lower bound.
	\item An optimal $O(D+\frac{D\log M}{\log D})$-time algorithm for broadcasting a $\log M$ bit message in directed networks.
	\item A corresponding  $\Omega(D+\frac{D\log M}{\log D})$ lower bound.
	\item An $O(k \log \frac{LM}{k}+D\log L)$-time explicit algorithm, and an optimal $O(k \log \frac{LM}{k}+D)$-time non-explicit algorithm for multi-broadcast with provenance (where every node must learn all (source ID, source message) pairs).
	\item A corresponding $\Omega(k \log \frac{LM}{k}+D)$ lower bound.
	\item An explicit algorithm for multi-broadcast without provenance (where every node must learn all unique source messages) taking $O(k \log \frac{M}{k}+D\log L)$ time when $M>k$ and $O(M+D\log L)$ time when $M\le k$.
	\item A non-explicit algorithm for multi-broadcast without provenance taking $O(k \log \frac{M}{k}+D+\log L)$ time when $M>k$ and $O(M+D+\log L)$ time when $M\le k$.
	\item A corresponding lower bound of $\Omega(k \log \frac{M}{k}+D)$ when $M>k$ and $\Omega(M+D)$ when $M\le k$.
\end{itemize}

These multi-broadcasting algorithms imply $O(n \log \frac{LM}{n})$ and $(n \log \frac{M}{n}+\log L)$-time gossiping algorithms with and without provenance respectively.

\section{Broadcasting}
The first, and most basic, task we will consider in the beep model is that of broadcasting, where a source node begins with a message of which to inform all other nodes. Since messages must, in effect, be transmitted ``bit by bit'' in a pattern of beeps and silence, algorithmic running time is affected by the length of the message we must transmit (this is not generally the case in standard radio networks, where we assume the message can be passed in a single transmission). So, we introduce a new parameter $M$ to specify message range, and assume that all messages to be broadcast are integers in $[M]$.

\subsection{Broadcasting in Undirected Networks}

Broadcasting in undirected networks will be performed using a method known as `beep waves'. Beep waves were first introduced by Ghaffari and Haeupler \cite{-GH13} as a means of transmitting information in the beep model. Variations of the technique are useful for different circumstances, and here we give a simple formalization tailored to the task of broadcasting from a single source.

The idea is the following: every three time-steps, starting at zero, the source transmits a bit of its message, that is it beeps to represent a \textbf{1} or remains silent to represent a \textbf{0}. All other nodes aim to relay any beep coming from a neighbor one hop closer to the source, in the next time-step after they hear it. Of course, nodes do not know the provenance of beeps they hear, but we can ensure that nodes will not hear any beeps from their own layer since they will themselves be beeping rather than listening. We can also stipulate that nodes become `deaf' and ignore any beeps they hear in time-steps immediately after they transmitted themselves, and this rules out beeps from the next layer. Then, nodes will only relay beeps from the previous layer, so the waves of beeps will emanate out from the source, one distance hop per time-step, and inform all nodes of the source message.

\begin{algorithm}[h]
	\caption{\textsc{Beep-Wave}$(s,m)$ at source $s$}
	\label{alg:BW1}
	\begin{algorithmic}
		\State s beeps at time-step $0$
		\For {$t = 1$ to $|m|$}
		\State \textbf{if} bit $m_t$ is \textbf{1} \textbf{then} $s$ beeps in time-step $3t$
		\EndFor
	\end{algorithmic}
\end{algorithm}
\addtocounter{algorithm}{-1}
\begin{algorithm}[h]
	\caption{\textsc{Beep-Wave}$(s,m)$ at non-source $u$}
	\begin{algorithmic}
		\State $j \gets$ first time-step $u$ hears a beep
		\While {end of message not heard}
		\If{$u$ hears a beep in time-step $t \equiv j \bmod 3$}
		\State{$u$ beeps in time-step $t+1$}
		\State bit $m(u)_{\frac{ t-j}{3 }} \gets 1$
		\EndIf
		\EndWhile
		\State output $m(u)$
	\end{algorithmic}
\end{algorithm}

\begin{theorem}
	\label{thm:BW1}
	\textsc{Beep-Wave}$(s,m)$ correctly performs broadcast in time $O(D+|m|) = O(D+\log M)$.
\end{theorem}

\begin{proof}
	Partition all nodes into layers depending on their distance from the source $s$, i.e., layer $L_i = \{v\in V : dist(v,s) = i\}$. We show that a node in layer $L_i$ beeps in time-step $t$ iff $3|t-i$ and either $m_\frac{t-i}{3} = 1$ or $t=i$, by induction on $t$.
	
	For $t=0$, the claim is trivially true, since the source $s\in L_0$ beeps, and all $u$ in later layers do not.
	
	For $t=t'>0$, the claim is again clearly true for the source $s$. Consider a non-source node $u\in L_i$, with $i\leq t'$. Such a node hears its first beep, from a neighbor in layer $L_{i-1}$, at time-step $i-1$ by the inductive assumption, and so sets $j=i-1$. Node $u$ can only beep in time-step $t'$ if $t' \equiv i \bmod3$, and in this case it beeps only upon hearing a beep in time-step $t' - 1$ (which, by the inductive assumption, can only come from a node in layer $L_{i-1}$). So, again by the inductive assumption, $m_\frac{(t'-1)-(i-1)}{3} = m_\frac{t'-i}{3} = 1$, i.e. $u$ beeps if and only if the correct conditions are satisfied.
	
	When $u$ beeps in time-step $t$, $m(u)_{\frac{ t-1-j}{3 }} = m(u)_{\frac{ t-i}{3 }}$ is set to $1$. So, $m_\frac{t-i}{3} = 1\iff m(u)_{\frac{ t-i}{3 }}  = 1$, i.e. $u$'s output message is correct.

	By induction the claim is true for all $t$, and so $m_\frac{t-i}{3} = 1 \iff m(u)_{\frac{ t-i}{3 }}  = 1$. Furthermore, after $D+3\log M$ time-steps, all nodes cease transmission.
	
\end{proof}

This is, to our knowledge, the first formalization of beep waves for the task of broadcasting, and the first efficient beeping algorithm for the task.

\subsection{Broadcasting in Directed Networks}

Allowing the underlying graph of the network to be directed greatly restricts what can be done efficiently in the beep model. Beep waves as described above, which are the basis of almost all efficient beeping algorithms, do not work on directed graphs since nodes cannot distinguish between new waves from the source and `backtracking' from further out layers. In particular, a beep-wave moving through the network can flood all previously reached layers with beeps every time-step, preventing any further communication until it is completed.

Despite these difficulties, we present an algorithm which broadcasts a message in $[M]$ within an optimal $O(\frac{D\log M}{\log D})$ time-steps. We assume throughout that $M\ge D$ (and this is necessary for the running time, since $\Omega(D)$ is a lower bound for broadcasting).

We first give an algorithm which assumes knowledge of $D$ (Algorithm \ref{alg:DB1}), and then describe how it can be extended to remove this assumption. To allow this subsequent extension, we will design Algorithm \ref{alg:DB1} to broadcast from a set $S$ of sources rather than a single source.

\begin{algorithm}[h]
	\caption{\textsc{DirectedBroadcast}$(m,D)$ at source $s\in S$}
	\label{alg:DB1}
	\begin{algorithmic}
		\State{beep in time-step $0$}
		\For{j from $1$ to $\frac{\log M}{\log D}$}
		\State{interpret bits $j\log D$ to $(j+1)\log D - 1$ of $m$ as an integer $x_j\in[0,D-1]$}
		\State{beep $x_j+D+1$ time-steps after previous beep sent}
		\EndFor
		\State{beep $2D+1$ time-steps after previous beep sent}
	\end{algorithmic}
\end{algorithm}
\addtocounter{algorithm}{-1}
\begin{algorithm}[h]
	\caption{\textsc{DirectedBroadcast}$(m,D)$ at non-source $u$}
	\begin{algorithmic}
		\State when $u$ first hears a beep in time-step $i$, it beeps in time-step $i+1$
		\Loop
		\If{$u$ hears a beep in time-step $t$}
		\State{$u$ beeps in time-step $t+1$}
		\State $x \gets$ number of time-steps since last beep heard
		\State \textbf{if} $x \le 2D$ \textbf{then} append $x-D-1$ as a bit-string to $m(u)$
		\State \textbf{else} output $m(u)$
		\State \textbf{end if}
		\State{$u$ becomes \emph{deaf} until time-step $t+D+1$}
		\EndIf
		\EndLoop
		\State output $m(u)$
	\end{algorithmic}
\end{algorithm}

The idea of this algorithm is still similar to beep-waves, in that beeps propagate out from the source set one distance layer per time-step. However, these waves could interfere with any layer they have already passed at any later time-step, so the waves cannot be pipelined as before, and we must instead wait $D$ time-steps for the wave to complete before anything more can be done. This is the purpose of nodes becoming \emph{deaf} for $D$ time-steps after relaying a beep; by this we mean that even if nodes hear beeps, they act as if they did not.

Since we cannot pipeline the waves, we instead use their timing to convey additional information; the source set must wait at least $D+1$ time-steps between waves, but if we allow it to choose any delay between $D+1$ and $2D$ then it can use these $D$ options to convey $\log D$ bits of the message. In this way we improve run-time by a factor of $\log D$ over the naive approach (of using beep-waves with $D$ time-steps delay).

\begin{lemma}
	Algorithm \ref{alg:DB1} performs broadcast from a set of sources $S$ in $O(\frac{D\log M}{\log D})$ time when $D$ is known.
\end{lemma}

\begin{proof}
	Similarly to our analysis of Algorithm \ref{alg:BW1}, we divide nodes into layers based on their distance from the source set, i.e. layer $L_i := \{v\in V : \min_{s\in S} dist(v,s) = i\}$. As before, nodes hear their first beep in time-step $i-1$, and first beep themselves in time-step $i$. 
	
	Let $m'$ be the bit-string transmitted by the sources, i.e. with $m'_0 = 1$ and $1$s placed at each interval $x_j$, where $x_j$ is the integer value of the $j^{th}$ block of $\log D$ message bits, as described. We prove that a node $v\in L_i$ beeps in time-step $t$ iff $m'_{t-i}=1$, by induction on $t$.
	
	The base case $t=0$ is obvious, since sources beep and non-sources do not, as required. Indeed, source nodes clearly have the correct behavior in all time-steps. For the inductive step $t=t'$, we examine a non-source node $v \in L_i$ and divide into two cases:
	
	\textbf{Case 1:} $m'_{t'-i}=1$, i.e. $v$ should beep. In this case, by the inductive assumption, in time-step $t'-1$ all nodes in $L_{i-1}$ beep, including a neighbor of $v$, so we need only show that $v$ is not \emph{deaf} at this time. This is the case, since, again by the inductive assumption, $v$ became deaf the last time it beeped (at time-step $\tilde t := t'-1-x_j$ for appropriate $j$), and no nodes in layers $L_{\geq i-1}$ have beeped between time-step $\tilde t + D$ and $t'-2$.
	
	\textbf{Case 2:} $m'_{t'-i}=0$, i.e. $v$ should not beep. If $v$ has beeped since time-step $t'-(D+1)$ steps then it will be \emph{deaf} and will not beep. Otherwise, the last time-step in which $v$ beeped (again denoted $\tilde t := t'-1-x_j$ for appropriate $j$) satisfies $\tilde t < t' - (D+1)$, in which case by the inductive assumption no node in layers $L_{\geq i-1}$ beep in time-step $t'-1$, so $v$ is silent in time-step $t'$ as required.
	
	Having proven that the beeping behavior of each node is as expected, it is easy to see that nodes can correctly reconstruct the intervals $x_j$ and therefore the message $m$ from their beeping pattern. Furthermore, all nodes cease beeping after at most $D+ 2D\frac{\log M}{\log D} = O(D\frac{\log M}{\log D})$ time-steps.
\end{proof}

This algorithm requires knowledge of $D$. However, it is easy to see that this assumption can be removed by using a doubling technique. Since nodes know their distance from the source after receiving their first beep, we can have them partition themselves into groups based on an exponentially increasing distance range, i.e., group $i$ consists of nodes of distance between $2^i$ and $2^{i+1}$ from the source. Then, we simply perform the algorithm in sequence for each group, with the closest distance layer in the group as the source set and the width of the group as the value for $D$. 

\begin{theorem}
	There is an algorithm which performs broadcasting in a directed network in the beep model in $O(\frac{D\log M}{\log D})$ time, without knowledge of network parameters.
\end{theorem}

\begin{proof}
	Consider an application of Algorithm \ref{alg:DB1} to a group $i$ (of width $2^i$) as described above. Every beep propagated through the group informs the nodes of $\log 2^i = i$ bits of the message, so after $\frac {\log M}{i}$ rounds broadcast is completed within the group. Each round takes at most $2^{i+1}$ time-steps, so the total time to broadcast within the group is $O(\frac{2^i \log M}{i})$. Therefore broadcasting is completed in the whole network within $O\left(\sum\limits_{i=1}^{\log D} \frac{2^i \log M}{i}\right)$ time. This can be bounded as follows:
	
	\begin{align*}
	\sum_{i=1}^{\log D} \frac{2^i \log M}{i}
	&\le
	\log M \left(\sum_{i=1}^{\frac{\log D}{2}} \frac{2^i}{i} +
	\sum_{i=\frac{\log D}{2}}^{\log D} \frac{2^i}{i}\right)
	\\
	&\le
	\log M \left(\sum_{i=1}^{\frac{\log D}{2}} 2^i +
	2\sum_{i=\frac{\log D}{2}}^{\log D} \frac{2^i}{\log D}\right)
	\\
	&\le
	\log M \left(2\sqrt D + \frac{4D}{\log D}\right)
	=
	O\left(\frac{D\log M}{\log D}\right)
	\enspace.
	\qedhere
	\end{align*}
\end{proof}

\section{Multi-Broadcast}
In this section we present our algorithms for the more complex task of multi-broadcast, in undirected networks.

\subsection{Auxiliary Tasks}
Our multi-broadcast algorithms will have a modular structure, i.e. we will use several sub-procedures to solve simpler tasks. We detail these tasks, and the algorithms we will use to solve them:

\subsubsection{Broadcasting}
The multi-broadcasting algorithms we present will, as one might expect, use single-source broadcasting as a sub-routine, and for this we can make use of \textsc{Beep-Wave} (Algorithm \ref{alg:BW1}). Since we must perform several broadcasts with several different messages, however, we must take care to ensure that these are distinguishable. This can be done by encoding the message so that it is obvious when the beginning and end are, for example by duplicating every bit of the message and then placing \textbf{10} at the beginning and end. Note that this coding method does not increase the asymptotic length, in bits, of the message, and that we can decode to find the original message(s), even if there are several, separated by any number of $\textbf{0}$s. We will henceforth assume that all source messages will be encoded in this way.

Algorithm \ref{alg:BW1} only functions correctly when called with a \emph{single} source node, and so we must somehow have the network agree on which node this source should be. To achieve this agreement, we will use an existing algorithm for \emph{leader election}.

\subsubsection{Leader Election}
Leader election enables all nodes to agree on the ID of one particular node to designate leader. In our applications, we will always choose the node with the highest ID in the entire network. More generally, though, leader election can be used on any subset of nodes, whenever each holds some integer value, to find the participating node with the highest (or lowest) such value. The values need not even be unique, since if multiple nodes hold the target value, we can pick out one by performing leader election again on their IDs. 

We wish to be able to perform leader election in $O(D\log L)$ time. If we assume parameter knowledge, there is a straightforward way to do this: we can perform a binary search for the highest ID, iterating through the bits of the IDs and having all nodes who are still ``in the running'' for leader, and who have a 1 in the current position, broadcast. While we cannot use our previous broadcast procedure with multiple sources, since these nodes need only transmit a single bit we can still use beep-waves to ensure that the network hears \emph{something}. This is sufficient for all nodes to determine whether any have a 1 in the current position. A similar method to this was used to perform leader election in radio networks in \cite{-CGR00}.

Without parameter knowledge, however, the task is much more difficult, since without estimates of how long broadcasting, for example, will take, we cannot globally co-ordinate node behavior. Fortunately, one of the few existing results in the beep model is an algorithm by F\"orster, Seidel, and Wattenhofer \cite{-FSW14} that achieves this:

\begin{theorem}
	There is an algorithm \textsc{ElectLeader} which performs leader election in time $O(D \log L)$ without prior knowledge of $D$ or $L$. \qed
\end{theorem}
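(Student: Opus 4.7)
The plan is to prove this by combining a standard binary-search-for-max over IDs with a doubling technique to handle the unknown parameters. At a high level, all nodes start as ``candidates'' for leader, and the algorithm iterates through bit positions of the IDs from most significant to least significant. In each bit-iteration, every remaining candidate whose ID has a \textbf{1} in the current position initiates a beep (which then propagates through the network as a beep wave, as in the broadcast algorithm of Lemma~\ref{lem:BW1}, though transmitting only a single bit). Every node thus learns whether any remaining candidate has a \textbf{1} in this position: if so, all candidates with a \textbf{0} here drop out; otherwise, all candidates continue. After the most significant $\log L$ bits have been processed, exactly one candidate, the node with the maximum ID, survives, and its ID is known to every node.

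Since a single bit-iteration requires only $O(D)$ time-steps for the beep wave to cross the network, and there are at most $\log L$ iterations, this gives the target $O(D \log L)$ bound, assuming $D$ and $L$ are known. The main obstacle, and the reason we need the [FSW14] result rather than a naive algorithm, is the absence of prior knowledge of $D$ and $L$: without $D$ we do not know how long to wait for a beep wave to traverse the network before starting the next bit-iteration, and without $\log L$ we do not know when the binary search is done.

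To handle this I would use doubling with verification. Maintain a current estimate $\hat{D}$ (initially a small constant); run the algorithm for $\hat{D}$ time-steps per bit-iteration, and after each phase perform a short verification in which the current set of candidates attempt to confirm (via an acknowledgement beep wave back across the network) that propagation completed within $\hat{D}$ steps. If not, double $\hat{D}$ and restart the current phase. For $L$, simply continue bit-iterations until all surviving candidates fall silent on every further bit, which indicates that the leader's ID has been fully determined; equivalently, one can keep an estimate $\hat{L}$ and double it similarly. Since the final estimate of $D$ is at most $2D$ and the final estimate of $\log L$ at most $2\log L$, and the cost of all previous doubled phases forms a geometric series dominated by the final phase, the total cost remains $O(D \log L)$. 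The formal correctness and precise implementation of the verification and restart mechanism — ensuring that candidates stay synchronized across failed phases and that acknowledgement waves cannot be confused with candidate signals — is the delicate part that [FSW14] works out in detail, and which we invoke as a black box.
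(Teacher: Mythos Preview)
The paper does not actually prove this theorem: it is stated with the \qed\ symbol inside the theorem environment and is explicitly attributed to F\"{o}rster et al.\ \cite{-FSW14} as a black-box result. The surrounding text even explains that the naive binary-search approach (essentially what you describe in your first two paragraphs) would require prior knowledge of $D$ and $L$, and that this is precisely why the authors defer to \cite{-FSW14} rather than proving it themselves.

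Your proposal therefore goes further than the paper does. The sketch you give---binary search over ID bits combined with doubling estimates $\hat{D}$ and $\hat{L}$, plus a verification/acknowledgement mechanism to detect when an estimate is too small---is a plausible outline, and you correctly flag that the synchronization and disambiguation of acknowledgement waves across failed phases is the delicate part. However, since your final sentence also invokes \cite{-FSW14} as a black box for exactly that delicate part, your proposal ultimately reduces to the same citation the paper makes; the preceding paragraphs are motivation rather than an independent proof. If you intend this as a self-contained argument, the gap is precisely the verification step: you have not specified a concrete protocol by which nodes, without knowing $D$, can reliably and synchronously detect that a beep wave failed to traverse the network, nor argued that restarts cannot cascade or desynchronize. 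Absent that, what remains is the paper's own approach: cite \cite{-FSW14}.
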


Furthermore, upon completion, all nodes have knowledge of the highest ID, and can therefore use this as $L$ in future operations.

To perform further tasks after leader election, we require that nodes should know that leader election is complete and that the next stage should begin. While the leader election algorithm \cite{-FSW14} does not immediately allow all nodes to agree on a time-step when this is the case, it does provide the property that the leader is aware of a time-step $t=O(D\log L)$ for which all nodes at distance $i$ from the leader have finished leader election by time-step $t+i$. That is, a procedure commencing with a beep wave from the leader at time-step $t$ will execute successfully. The procedure for diameter estimation we now describe has precisely this property.

\subsubsection{Diameter Estimation}

Our model assumes that nodes do not have access to any of the network parameters. In algorithms for complex tasks, we generally wish to start with a leader election phase, and this provides all nodes with knowledge of $L$. However, if we also wish to know the value of $D$, we must perform an extra task for this purpose.

Our diameter estimation procedure (Algorithm \ref{alg:ED1}) works as follows: we take as input a leader node to co-ordinate the process. An initial beep from the leader propagates through the network. Having received this beep, nodes beep to acknowledge their existence back to the leader; a modularity restriction on when nodes can transmit ensures that these beeps only travel backwards through the layers. While the initial beep from the leader is still reaching further nodes, acknowledgment beeps will continue to return through the network every three time-steps. Once all nodes have been reached, this pattern will cease, and the leader will know the distance of the furthest node, and hence a $2$-approximation of diameter. All of the other nodes have also ceased transmission, and so an application of \textsc{Beep-Wave} can safely be used to broadcast the diameter estimate.

We split the algorithm into two parts, one performed by the leader, and one performed by all non-leader nodes, since their behavior is quite different.

\begin{algorithm}
	\caption{\textsc{EstimateDiameter}$(v)$ at leader $v$}
	\label{alg:ED1}
	\begin{algorithmic}
		\State $v$ beeps in time-step $2$
		\State{let $t$ be the first time-step (greater than $3$) in which $v$ has not received a beep \\
			\qquad for $3$ previous time-steps}
		\State let $\tilde{D} = \frac{2t-8}{3}$
		\State perform \textsc{Beep-Wave}$(v,\tilde{D})$
		\State output $\tilde{D}$
	\end{algorithmic}
\end{algorithm}
\addtocounter{algorithm}{-1}
\begin{algorithm}
	\caption{\textsc{EstimateDiameter}$(v)$ at non-leader $u$}
	\begin{algorithmic}
		\State let $j$ be the first time-step in which $u$ receives a beep
		\State $u$ beeps in time-step $j+2$
		\While{$u$ has heard a beep in the last $3$ time-steps}
		\State{any beep $u$ hears in a time-step equivalent to $j+1 \bmod 3$, \\
			\qquad\qquad it relays in the next time-step}
		\EndWhile
		\State $\tilde{D} \gets \textsc{Beep-Wave}(v,\tilde{D})$
		\State output $\tilde{D}$
	\end{algorithmic}
\end{algorithm}

\begin{lemma}
	\label{lem:ED}
	\textsc{EstimateDiameter} correctly broadcasts an estimate $\tilde{D}$ satisfying $D \le \tilde{D} \le 2D$, and terminates within $O(D)$ time-steps.
\end{lemma}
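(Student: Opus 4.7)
The plan is to track the two beep waves generated by the procedure: a forward wave initiated by $v$'s beep at time $1$, and a backward acknowledgment wave that returns to $v$. Let $D_v = \max_u \mathrm{dist}(u,v)$ be the eccentricity of $v$; by the triangle inequality applied to any two nodes realizing the diameter, $D/2 \le D_v \le D$.

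First, by induction on the distance $d$ from $v$, I would show that every depth-$d$ node hears its first beep at some time $j$ with $j \le d + O(1)$ and schedules its initial response beep also at a time in $d + O(1)$. A key sub-claim is that all depth-$d$ nodes share a common value of $j \bmod 3$, so they agree on the relay residue $(2-j) \bmod 3$ and the subsequent wave propagation is well-defined. This shows the forward wave reaches the deepest nodes by some time $t_{\mathrm{fwd}} = O(D_v)$.

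Second, I would analyze the backward acknowledgment wave by induction on the number of hops retraced. Once a depth-$(d{+}1)$ node fires its response, its beep arrives at an adjacent depth-$d$ node at a time-step whose residue matches that layer's relay rule $(2 - j_d) \bmod 3$, so the depth-$d$ node relays it in the following time-step. Iterating, the acknowledgment originating at a depth-$D_v$ node returns to $v$ by some $t_{\mathrm{last}} \le 2 D_v + O(1)$. I would also verify that $v$ hears at least one beep every three time-steps throughout $[1,t_{\mathrm{last}}]$, so the termination condition ``no beep in the last three steps'' is not triggered prematurely.

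Third, I would argue that $v$ hears no beep after $t_{\mathrm{last}}$: each node's initial response is a single beep, and its relay rule admits only one residue class mod $3$, so once both waves have passed a layer no further arrivals fall in the relay residue. Hence the leader's counter satisfies $i = t_{\mathrm{last}} + O(1)$, giving $\tilde D \le 2 D_v + O(1) \le 2D + 7$ after adjusting constants. For the lower bound, $\tilde D \ge t_{\mathrm{last}} \ge 2 D_v \ge D$. The concluding \textsc{Beep-Wave}$(v,\tilde D)$ then takes $O(\tilde D + \log \tilde D) = O(D)$ time by Lemma~\ref{lem:BW1}, yielding total running time $O(D)$.

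The hard part is the modular bookkeeping in the second and third steps: one must verify that the residues $-j \bmod 3$ (for the initial response) and $2-j \bmod 3$ (for the relay) are designed so that (i) forward-wave responses and backward-wave relays at each depth occupy distinct time-step residues and do not trigger spurious cross-relays, and (ii) after both waves have swept past a layer, no residual beep arrivals coincide with the relay residue, so the leader's termination check cleanly detects the end. A case analysis on $j \bmod 3$ (equivalently on $d \bmod 3$) at each layer should settle these points.
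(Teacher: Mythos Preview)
Your proposal is correct and follows essentially the same approach as the paper: define the leader's eccentricity $D_v$ (the paper calls it $D'$), use $D/2 \le D_v \le D$, track the forward wave out and the acknowledgment wave back to bound $\tilde{D}$ between $2D_v$ and $2D_v + 7$, and invoke Lemma~\ref{lem:BW1} for the final broadcast. The paper's own proof is far terser and simply asserts that the waves propagate as intended, whereas you flag the modular bookkeeping as needing a case analysis; this extra care is reasonable but does not constitute a different argument.
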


\begin{proof}
	The first part of the algorithm, in which the leader $v$ beeps in time-step $2$ and other nodes relay beeps after two steps, is effectively a beep wave propagating outwards from the leader one hop per two time-steps. It is easy to see that a node at distance $i$ from the leader receives its first beep in time-step $2i$, and so sets $j=2i$. Furthermore, any node in of distance $i+1$ receives its first beep in time-step $2i+2$, and subsequently beeps itself in time-step $2i+4 \equiv 2i+1 \bmod 3$. This meets the modularity requirement for a node at distance $i$ to relay the beep in the next time-step. Indeed, in general, the modularity requirement ensures that nodes always relay beeps received from the nodes 1 hop further from the leader, and never relay beeps from nodes 1 hop nearer, or the same distance. So, the effect is that a beep wave is sent back to the leader, every three time-steps, by nodes as they are reached by the initial wave.
	
	When the leader no longer receives these beep waves (i.e. as soon as 3 consecutive time-steps occur with no beep heard), it can conclude that all nodes have been reached by the initial beep-wave and have sent a beep-wave back. 
	
	Let $D'$ be the distance from the leader to some the furthest node $u$. Then, $D\le 2D' \le 2D$. The leader emits a beep in time-step $2$ which travels to this furthest node in time-step $2D'$. Node $u$ then beeps in time-step $2D'+2$, and this beep is relayed back to the leader in time-step $3D'+1$. After another $3$ time-steps, the leader knows that it has received the final acknowledgment beep, and sets $t=3D'+4$, making its diameter estimate $\tilde{D} = 2D'$. Hence, as required, $D\le \tilde{D}\le 2D$.
	
	To analyze running time, notice that the leader $v$ reaches its estimate $\tilde D$ in $3D'+4 = O(D)$ time-steps, and the final beep-wave of this value takes also takes $O(D + \log D)$ = $O(D)$ time.
\end{proof}

Since we are only interested in asymptotic behavior, we will assume, for ease of notation, that having performed \textsc{EstimateDiameter} as part of a more complex algorithm we can then make use of the exact value of $D$. Furthermore, once leader election and diameter estimation are performed, all nodes have common linear estimates of $D$ and $L$ and so can agree on a time-step in which both tasks are complete and further procedures can commence.

\subsubsection{Message Collection}

We next introduce a sub-procedure (Algorithm \ref{alg:CM1}) which will allow the leader to collect messages $m(S)$ from a set of sources $S$, receiving an \textbf{OR}-superimposition of all the messages. This works similarly to the usual beep-waves procedure, except that nodes use their distance from the leader (inferred by the time taken to receive the initial \textsc{Beep-Wave}$(v,\textbf{1})$) to ensure that the waves only travel towards the source, and all messages arrive at the same time. We must have an input parameter $p$ giving an upper bound on the length of messages, so that nodes know when the procedure is finished, and we assume that we have already performed \textsc{EstimateDiameter} and so can make use of $D$. We denote by $dist(u)$ the distance from $u$ to the leader node $v$, which can be determined during an application \textsc{Beep-Wave}$(v,\textbf{1})$.

\begin{algorithm}
	\caption{\textsc{CollectMessages}$(v,S,m(S),p)$ at node $u$}
	\label{alg:CM1}
	\begin{algorithmic}
		\State perform \textsc{Beep-Wave}$(v,\textbf{1})$
		\For {$j= 0$ to $p$}
		\If{$m(u)_{j}=1$ or $u$ hears a beep in time-step $D-dist(u)+3j-1$}
		\State{$u$ beeps in time-step $D-dist(u)+3j$}
		\State \textbf{if} $u$ = $v$ \textbf{then} bit $m(u)_{(j-D)/3} \gets 1$
		\EndIf
		\EndFor
		\State output $m(v)$
	\end{algorithmic}
\end{algorithm}

\begin{lemma}
	\textsc{CollectMessages}$(v,S,m(S),p)$ correctly informs $v$ of the \textbf{OR}-superimposition of $m(S)$ within $O(D+p)$ time-steps
\end{lemma}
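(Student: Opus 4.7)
The plan is to analyse the algorithm in two phases. The first phase is the initial \textsc{Beep-Wave}$(v,\mathbf{1})$, which by Lemma~\ref{lem:BW1} takes $O(D)$ time; since each node $u$ receives that wave exactly $\mathrm{dist}(u)$ time-steps after $v$ transmits it, every $u$ simultaneously learns $\mathrm{dist}(u)$, which is what the second-phase timing formulas require. In the second phase I would show that for each source $s$ with $m(s)_i = 1$ a beep arrives at $v$ in time-step $3i+D-1$, and that different $(i,s)$ combinations either coincide (and harmlessly OR-superimpose) or are separated enough in time to avoid interference.

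The core technical claim, proved by induction on decreasing distance from $s$, is: whenever source $s$ beeps for its $i$-th bit at time $3i + D - \mathrm{dist}(s)$, there exists for every $d \in \{1,2,\dots,\mathrm{dist}(s)\}$ a node at distance $d$ from $v$ that beeps in time-step $3i + D - d$. The inductive step uses two facts: (i) every node at distance $d+1$ has at least one BFS-parent at distance $d$ which hears its beep, and (ii) the parent's receive-time $3i+D-d-1$ satisfies the modular trigger $j \equiv 2 + D - \mathrm{dist}(u) \bmod 3$ precisely because $-d-1 \equiv 2-d \bmod 3$, so it relays one step later at $3i+D-d$. Applied at $d=1$ this gives a neighbour of $v$ beeping in time-step $3i+D-1$, so $v$ hears a beep in that step and the non-source code correctly records the bit (with the index offset $\lfloor(3i+D-1)/3\rfloor$ recoverable from the known $D$). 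Conversely, the same modular identity rules out backward propagation: if $u$ relays at time $j+1$ then $j+1 \equiv D - \mathrm{dist}(u) \bmod 3$, and for a neighbour $w$ with $\mathrm{dist}(w) \in \{\mathrm{dist}(u),\mathrm{dist}(u)+1\}$ the trigger condition fails, so only strictly closer neighbours continue the wave.

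Interference is then routine: for fixed $i$, waves from distinct sources reach any given node at the same time-step $3i+D-d$ and OR together as desired, while incrementing $i$ shifts that time by exactly $3$, which is enough spacing to prevent adjacent bits' relay cycles from colliding. The total time is $O(D)$ for the preamble plus $3p+D$ for the bit-collection loop, giving $O(D+p)$ overall. The main obstacle I anticipate is handling source nodes that lie on the propagation path of some other source's wave, since Algorithm~\ref{alg:CM1} as written does not explicitly perform relays; the cleanest reading is that a source also executes the relay behaviour of Algorithm~\ref{alg:CM2}, and one then checks that its own-bit transmission time $3i+D-\mathrm{dist}(s)$ falls in exactly the mod-$3$ class prescribed for a legitimate relay at its layer, so the two behaviours compose without generating spurious beeps.
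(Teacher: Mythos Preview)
Your proposal is correct and follows the same approach as the paper: the modularity condition forces waves to travel only toward $v$, and the staggered starting times make all bit-$i$ waves arrive simultaneously, so $v$ receives the \textbf{OR}. The paper's own proof is a two-sentence sketch asserting exactly these two facts without verifying the modular arithmetic or the induction you spell out; your version is strictly more detailed, and your observation about sources needing to relay other sources' waves (and the check that their own transmission times fall in the correct residue class) is a genuine subtlety that the paper's presentation glosses over entirely.
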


\begin{proof}
	It is clear that (excluding the initial beep wave) a node $u$ at distance $dist(u)$ from the leader $v$ only ever beeps in time steps equivalent to $D-dist(u) \bmod 3$. Furthermore, nodes only relay beeps they hear in time-steps equivalent to $D-dist(u)-1 = D-(dist(u)+1) \bmod 3$, i.e. they only relay beeps from nodes one hop further than them from the leader. So, if any source node $s$ has $m(s)_j=1$ for some $j$, it beeps in time-step $D-dist(u)+3j$, and this is relayed back to the leader one distance-hop per time-step. The leader $v$ beeps in time-step $D-dist(u)+3j+dist(u)=D+3j$, and hence correctly sets $m(v)_j=1$.
	
	The running time for the initial beep wave is $D$ steps, and for the loop is $3p+D$. So, total running time is $O(D+p)$.
\end{proof}

\subsubsection{Message Length Determination}

One issue with using \textsc{CollectMessages} is the necessity of prior knowledge of a common upper bound on message size. We give a simple method of obtaining this bound (Algorithm \ref{alg:GML}).

We perform \textsc{CollectMessages} using strings which are as long as the messages we actually want to collect, but consist of entirely \textbf{1}s. The superimposition of these strings is a \textbf{1}-string of equal length to the longest message. Since the leader will be able to tell that this string has ended when it hears the substring \textbf{10}, the procedure can be terminated even without an upper bound for the \textsc{CollectMessages} call.

\begin{algorithm}
	\caption{\textsc{GetMessageLength}$(v,S,m(S))$}
	\label{alg:GML}
	\begin{algorithmic}
		\State{perform $p \gets$ \textsc{CollectMessages}$(v,S,\textbf{1}^{m(S)},\infty)$, terminating \\
			\qquad when $v$ hears the substring~\textbf{10}}
		\State perform \textsc{Beep-Wave}$(v,|p|)$
		\State output $|p|$
	\end{algorithmic}
\end{algorithm}

\begin{lemma}
	\textsc{GetMessageLength}$(v,S,m(S))$ correctly informs all nodes of $q = \max_{s\in S}|m(s)|$ within $O(D+q)$ time-steps
\end{lemma}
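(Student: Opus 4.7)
The plan is to apply the correctness of \textsc{CollectMessages} and Lemma~\ref{lem:BW1} as black boxes, and argue only about what the leader observes. When each source $s \in S$ transmits the all-ones string $\textbf{1}^{|m(s)|}$, the bitwise \textbf{OR}-superimposition delivered to $v$ is itself a string of \textbf{1}s whose length equals $\max_{s \in S} |m(s)| = p$: position $i$ is a \textbf{1} in the superimposition iff some source contributes a \textbf{1} at position $i$, which happens iff at least one source's string has length $\ge i$. Thus the sequence of superimposed bits arriving at $v$ is $\textbf{1}^{p}$ followed by silence.

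Next, I would verify that the leader can safely terminate the \textsc{CollectMessages} call upon seeing the substring \textbf{10} in its received stream. From the timing analysis of \textsc{CollectMessages}, $v$ receives the $i$-th superimposed bit in a specific time-slot offset from $3i$ by the (fixed) round-trip distance, so consecutive bits are unambiguously separated and the bit stream is read off without loss. Consequently $v$ hears a \textbf{1} in each of the first $p$ slots and silence from slot $p+1$ onwards, and the \textbf{10} pattern marks the end of the transmission exactly, letting $v$ compute $|p|=\max_{s\in S}|m(s)|$.

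The running time then follows by composition. The collection phase finishes when the $(p+1)$-th slot is read, which occurs within $O(D+p)$ time-steps, after which every node in the network has ceased generating beeps. The leader then invokes \textsc{Beep-Wave}$(v,|p|)$; since $|p|$ is a $\lceil\log p\rceil$-bit integer, Lemma~\ref{lem:BW1} gives a broadcast time of $O(D + \log p) = O(D + p)$, so the overall cost is $O(D+p)$, and all nodes learn $p$ from this final broadcast.

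The only real subtlety, which I regard as the main obstacle, is that \textsc{CollectMessages} was originally specified with a finite length parameter driving the non-source nodes' loop bound, while here we pass ``$\infty$'' and rely on termination detection at the leader. I would check that non-source nodes nevertheless behave correctly: their only actions are conditional on hearing a beep in a specific modular slot, and after time-step approximately $3p+D$ no beeps are generated anywhere in the network, so non-leaders simply stay quiescent until they are synchronized by the subsequent \textsc{Beep-Wave} carrying $|p|$. Once this is noted, the lemma reduces to a straightforward application of the two preceding results.
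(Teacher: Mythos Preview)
Your proposal is correct and follows essentially the same approach as the paper: argue that the \textbf{OR}-superimposition of the all-ones strings is $\textbf{1}^{p}$, that the leader detects termination upon reading the pattern \textbf{10}, and that the costs of \textsc{CollectMessages} and \textsc{Beep-Wave} are $O(D+p)$ and $O(D+\log p)$ respectively.

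The one place where the paper is slightly more explicit than you is the phase transition at non-leader nodes. You write that non-leaders ``stay quiescent until they are synchronized by the subsequent \textsc{Beep-Wave},'' but this is not quite enough: a node still running \textsc{CollectMessages} relays beeps only in slots matching a fixed modularity directed toward $v$, which is \emph{not} the forwarding rule required by \textsc{Beep-Wave}. The paper closes this by giving each non-leader a local termination test---either it heard a run of \textbf{1}s followed by a \textbf{0}, or it heard silence for more than $D$ steps---after which it knows to switch behaviour. You correctly flagged this as the main subtlety; adding that local detection rule is what actually resolves it.
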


\begin{proof}
	\textsc{CollectMessages} will terminate after $D+3q$ steps, since $v$ will hear the final \textbf{1} and then a \textbf{0}. All other nodes will be inactive and so \textsc{Beep-Wave}$(v,|p|)$ will successfully inform the network of $q$ (nodes will be aware that the \textsc{CollectMessages} phase is over and so perform \textsc{Beep-Wave} correctly, since they either heard a string of contiguous \textbf{1}s and then a \textbf{0} during \textsc{CollectMessages}, or silence for more than $D$ time-steps).
	
	Running time is $O(D+q)$ for \textsc{CollectMessages} and $O(D+\log q)$ for \textsc{Beep-Wave}, giving $O(D+q)$ total.
\end{proof}

\subsection{Explicit Multi-Broadcast Algorithms}

We are now ready to combine these sub-procedure to perform multi-broadcast. Recall that we consider two variants of the problem: \emph{multi-broadcast with provenance}, where the network must become aware of all (source ID, source message) pairs, and \emph{multi-broadcast without provenance}, where the IDs need not be known. 

\subsubsection{Multi-Broadcast With Provenance}

We first present an algorithm for multi-broadcast with provenance, where all nodes must be made aware of not only the source messages, but also the IDs of the sources they originated from.

The idea of the algorithm is essentially to conduct $k$ simultaneous binary searches to allow a leader to ascertain the IDs of all sources. The process consists of $\log L$ rounds, one for each bit of the IDs. Each node will maintain a list of known prefixes of source IDs, and we aim to preserve the invariant that, after round $i$, all nodes know the first $i$ bits of every source ID. We denote the number of distinct known prefixes at the start of round $i$ by $k_i$.

At the start of round $i$, sources know $k_{i}$ distinct $i-1$-bit ID prefixes (note $k_{i}$ may be less than $k$, since some IDs may share prefixes), and they will each construct a $2k_{i}$-bit string in which each bit corresponds to a particular $i$-bit prefix. Specifically, if we denote the known prefixes in lexicographical order by $(p_1,p_2,\dots , p_{k_{i}})$, then bit $2j$ in the new string will represent the prefix $p_j\textbf{0}$, and bit $2j+1$ will represent $p_j\textbf{1}$. Each source constructs its string by placing a \textbf{1} in the position corresponding to its own ID's $i$-bit prefix, and \textbf{0} in all others. We will denote the string constructed in this manner by source $s$ in round $i$ by $Z_{s,i}$.

Performing \textsc{CollectMessages} with these strings ensures that the leader receives the \textbf{OR}-superimposition, which informs it of all $i$-bit prefixes of source IDs (since it is aware of which prefix each position corresponds to). It then broadcasts this back out to the network via the standard beep wave procedure, and thus the invariant is fulfilled round $i$. After $\log L$ rounds, the IDs of all sources are known in entirety by all nodes. We then perform one final \textsc{CollectMessages} procedure, this time to collate all of the messages the sources wish to broadcast to the network. We construct a $k\log M$-bit string in which the $j^{th}$ block of $\log M$ bits corresponds to the message of the $j^{th}$ source (in lexicographical order of ID). Each source individually fills in its own message in the appropriate block, leaving all other bits as \textbf{0}. We denote the string constructed in this manner by source $s$ as $\tilde{m}_s$. Performing \textsc{CollectMessages} on these strings ensures that the full string of messages arrives at the leader, who then broadcasts it back out to the network.

\begin{algorithm}[h]
	\caption{\textsc{Multi-Broadcast With Provenance}$(S,m(S))$}
	\label{alg:MB1}
	\begin{algorithmic}
		\State $v \gets \textsc{ElectLeader}$
		\State $D \gets \textsc{EstimateDiameter}(v)$
		\State $\log M \gets \textsc{GetMessageLength}(v,S,m(S))$
		\For {$i = 1$ to $\log L$}
		\State $Z_i \gets \textsc{CollectMessages}(v,S,Z_{S,i}, 2k_i)$
		\State perform \textsc{Beep-Wave}$(v,Z_i)$
		\EndFor
		\State $\tilde{m} \gets \textsc{CollectMessages}(v,S,\tilde{m}_{S}, k \cdot \log M)$
		\State perform \textsc{Beep-Wave}$(v,\tilde{m})$
	\end{algorithmic}
\end{algorithm}

\begin{theorem}
	\label{the:MB1}
	\textsc{Multi-Broadcast With Provenance}$(S,m(S))$ correctly performs multi-broadcast with provenance within $O(k \log \frac{LM}{k} + D \log L)$ time-steps.
\end{theorem}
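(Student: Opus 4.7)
The plan is to separate the argument into a correctness claim (via an invariant maintained across rounds of the main loop) and a running-time bound that reduces, after the subroutines' costs are absorbed, to analysing $\sum_{i=1}^{\log L} k_i$. By Lemma~\ref{lem:BW1}, Lemma~\ref{lem:ED}, and the lemmas for \textsc{ElectLeader}, \textsc{CollectMessages} and \textsc{GetMessageLength}, I can treat those calls as black boxes returning correct values within their stated time bounds, and in particular assume that at the start of the main loop all nodes know $v$, $D$, and $\log M$.

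For correctness, I will prove by induction on $i$ the invariant that at the end of the $i$-th iteration of the main loop, the leader $v$ and every source in $S$ possess the set $P_i$ of all $i$-bit prefixes of IDs in $S$, enumerated in a common (lexicographic) order. The base case $i=0$ is vacuous (only the empty prefix exists, so $k_1=1$). For the inductive step, the enumeration of $P_{i-1}$ lets sources agree on the meaning of each bit of the $2k_i$-bit string, so $Z_{s,i}$ is well-defined; the \textbf{OR}-superimposition collected by \textsc{CollectMessages} then has a $1$ in the position corresponding to prefix $p\textbf{0}$ (resp.\ $p\textbf{1}$) precisely when some $s\in S$ has that $i$-bit prefix. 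The subsequent \textsc{Beep-Wave} distributes $Z_i$ to every source, re-establishing the invariant. After $\log L$ rounds, $P_{\log L}$ is the full set of IDs in $S$, so the final \textsc{CollectMessages} with blocks indexed by these IDs lets $v$ collate all (ID, message) pairs, which are then delivered to the network by the last \textsc{Beep-Wave}.

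For the running time, the preliminary calls cost $O(D\log L)+O(D)+O(D+\log M)$. In iteration $i$, \textsc{CollectMessages} with length parameter $2k_i$ costs $O(D+k_i)$, and the subsequent \textsc{Beep-Wave} broadcasting a string of length $2k_i$ also costs $O(D+k_i)$. The final two calls cost $O(D+k\log M)$ each. Summing, the total is
\[
O(D\log L) \;+\; O\!\left(\sum_{i=1}^{\log L} k_i\right) \;+\; O(k\log M).
\]
The key observation is that $k_i\le\min(k,\,2^{i-1})$, since $k_i$ counts distinct $(i{-}1)$-bit prefixes among at most $k$ IDs. Splitting the sum at $i=\lceil\log k\rceil+1$ and using a geometric-series bound on the low part and a trivial bound on the high part gives $\sum_i k_i = O(k + k\log(L/k)) = O(k\log(L/k))$ (taking $L\ge k$ without loss of generality). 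Combining, the total is $O(D\log L + k\log(L/k) + k\log M) = O(k\log\tfrac{LM}{k} + D\log L)$, as required.

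The only non-routine step is the bound on $\sum_i k_i$; everything else is bookkeeping of subroutine costs. A secondary point worth stating carefully is that the sources' ability to construct $Z_{s,i}$ of the correct length and semantics follows from the invariant (they have seen the $k_i=\|Z_{i-1}\|_1$ common prefixes and their canonical ordering), so no auxiliary synchronisation is needed between rounds.
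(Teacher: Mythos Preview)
Your proposal is correct and follows essentially the same approach as the paper: both arguments establish the per-round invariant on known ID prefixes, bound each round by $O(D+k_i)$, and evaluate $\sum_{i=1}^{\log L}k_i$ by splitting at $i\approx\log k$ with a geometric bound below and the trivial bound $k_i\le k$ above to obtain $O(k\log(L/k))$. Your correctness argument is slightly more explicit (a formal induction rather than the paper's informal description), but the content and structure of both the correctness and running-time analyses are the same.
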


\begin{proof}
	The three sub-procedure calls in the initial `set-up' phase take a total of $O(D \log L + \log M)$ time-steps, and provide a leader node and knowledge of $D$ and $\log M$.
	
	Round $i$ of the main loop of the algorithm takes $O(D+k_i)$ time, since it consists of performing \textsc{CollectMessages} on strings of length $O(k_i)$, and then \textsc{Beep-Wave} on a string of the same length. Furthermore, since the number of known prefixes at most doubles each round, $k_i \le 2^{i-1}$. Hence, there exists some constant $c$ such that total time for the loop is bounded by:
	\begin{align*}
	\sum_{i=1}^{\log L} c (D + k_i)
	& =
	c D \log L + c \left(\sum_{i=1}^{\log k} k_i + \sum_{i=\log k+1}^{\log L} k_i\right)\\
	&\le
	c D \log L + c \left(\sum_{i=1}^{\log k} 2^{i-1} + \sum_{i=\log k+1}^{\log L} k\right)
	\\
	& \le
	c D \log L + c(k + k(\log L - \log k))
	=
	O(D \log L + k \log \frac{L}{k})
	\enspace.
	\end{align*}
	
	The final call to \textsc{CollectMessages} then takes a further $O(D+ k \log M)$ time, and so total running time is $O(D \log L + k \log \frac{L}{k} + k \log M) = O(k \log \frac{LM}{k} + D \log L)$
	
	Correctness follows since each round of the loop informs the leader of the next bit in each ID prefix, and it then broadcasts this information to the network. After $\log L$ rounds, all nodes know all source IDs and each source $s$ can correctly construct its string $\tilde{m}_s$. The \textbf{OR}-superimposition of these strings, broadcast to all nodes, is a list of messages in source ID order, which fulfills the goal of the algorithm.
\end{proof}

\subsubsection{Multi-Broadcast Without Provenance}

It may be the case that we do not need to know where messages originated from, or the number of duplicate messages; for example when using short control messages instructing all nodes to perform some action, for which provenance might be irrelevant. For this reason, we also study the variant of multi-broadcast where nodes need only know one copy of each unique source message, and no source IDs.

The main difference in concept for our multi-broadcast without provenance algorithm (Algorithm \ref{alg:MB2}) is that the concurrent binary searches are performed on the bits of the source messages rather than the IDs. However, this requires $O(D\log M)$ time, which is too slow when $k<D$ and $L<M$, and so we first run Algorithm \ref{alg:MB1}, curtailing it when our number $k_i$ of known ID prefixes (which is a lower bound for $k$) exceeds $D$, in order to efficiently deal with these cases.

If $k \le D$ then the call to algorithm \ref{alg:MB1} will complete multi-broadcast (meeting the requirements for the case without provenance, since they are strictly weaker than those with provenance). Otherwise, we move onto performing binary searches on the bits of the message. This functions in much the same way as in Algorithm \ref{alg:MB1}, except that we do not need the final \textsc{CollectMessages} and \textsc{Beep-Wave} stage since the network is already aware of all source messages upon completion of the main loop. We will use $\tilde{k}_i$ to denote the number of $i-1$-bit message prefixes known to nodes at the start of round $i$ of the for loop, and $\tilde{Z}_{s,i}$ to be the string constructed by source $s$ in round $i$ by placing a \textbf{1} in the position corresponding to the $i$-bit prefix of its message and \textbf{0} in all others.

\begin{algorithm}[h]
	\caption{\textsc{Multi-Broadcast Without Provenance}$(S,m(S))$}
	\label{alg:MB2}
	\begin{algorithmic}
		\State perform \textsc{Multi-Broadcast With Provenance}$(S,m(S))$ until $k_i > D$
		\If {it did not complete}
		\For {$i = 1$ to $\log M$}
		\State $\tilde{Z}_i \gets \textsc{CollectMessages}(v,S,\tilde{Z}_{S,i}, 2\tilde{k}_i)$
		\State perform \textsc{Beep-Wave}$(v,\tilde{Z}_i)$
		\EndFor
		\EndIf
	\end{algorithmic}
\end{algorithm}

\begin{theorem}
	\label{the:MB2}
	\textsc{Multi-Broadcast Without Provenance}$(S,m(S))$ correctly performs multi-broadcast without provenance within $O(k \log \frac{M}{k} + D \log L)$ time-steps if $k<M$, and $O(M + D \log L)$ time-steps if $k \ge M$.
\end{theorem}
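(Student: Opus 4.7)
The plan is to split the analysis according to whether the curtailed call to \textsc{Multi-Broadcast With Provenance} completes on its own (which occurs exactly when $k \le D$) or is halted early (when $k>D$). The first case reduces immediately to Theorem~\ref{the:MB1} plus a little arithmetic, since multi-broadcast with provenance is strictly stronger than without. The second case requires me to bound the cost of the curtailed phase and the cost of the subsequent message-bit binary search separately.

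For correctness, the $k\le D$ case is immediate. For $k>D$, once phase 1 halts all sources and the leader are synchronized, share the same initial prefix set of size $\tilde{k}_1=1$, and then enter the main loop. The invariant argument in the proof of Theorem~\ref{the:MB1} carries over verbatim with ``IDs'' replaced by ``messages'' and ``$\log L$ rounds'' by ``$\log M$ rounds'': each iteration extends every known message prefix by one bit, and after $\log M$ rounds every non-source node---via the intermediate \textsc{Beep-Wave}s---has received the full list of distinct source messages, which is exactly what multi-broadcast without provenance requires.

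For the running time when $k\le D$, Theorem~\ref{the:MB1} gives $O(k\log\frac{LM}{k}+D\log L)$. Splitting $k\log\frac{LM}{k}=k\log L+k\log\frac{M}{k}$ and using $k\le D$ bounds $k\log L \le D\log L$; when additionally $k\ge M$ the term $\log\frac{M}{k}$ is nonpositive and the bound collapses to $O(D\log L)\subseteq O(M+D\log L)$. When $k>D$, the setup stages and curtailed call cost $O(D\log L)$ in total: the stopping condition $k_{i^*}>D$ together with $k_{i^*}\le 2k_{i^*-1}\le 2D$ makes $\sum_{i\le i^*}k_i$ a geometric sum of magnitude $O(D)$, while the per-round $D$-term contributes $i^*\cdot D\le D\log L$. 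Phase~2 then costs $\sum_{i=1}^{\log M}O(D+\tilde{k}_i)=D\log M+\sum\tilde{k}_i$. Bounding $\tilde{k}_i\le\min(k,M,2^{i-1})$ and applying the same geometric split as in Theorem~\ref{the:MB1} yields $\sum\tilde{k}_i=O(k\log\frac{M}{k})$ if $k<M$ and $O(M)$ if $k\ge M$.

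The main obstacle, and the only step needing extra care, is absorbing the $D\log M$ term from phase 2 into the claimed $D\log L$ (plus the lower-order term). I would exploit the fact that phase 2 is entered only when $D<k$, together with the universal bound $k\le n\le L$. When $k<M$ this gives $D\log M=D\log k+D\log\frac{M}{k}\le D\log L+k\log\frac{M}{k}$, since $D\le k$ promotes the second summand. When $k\ge M$ it gives $D\log M\le D\log k\le D\log L$ directly. Combining these with the phase-1 and phase-2 estimates above yields $O(k\log\frac{M}{k}+D\log L)$ when $k<M$ and $O(M+D\log L)$ when $k\ge M$, as claimed.
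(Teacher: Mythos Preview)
Your proposal is correct and follows essentially the same four-way case split ($k\le D$ vs.\ $k>D$, then $k<M$ vs.\ $k\ge M$) and the same absorption of the $D\log M$ term as the paper. One small imprecision: the claim that $\sum_{i\le i^*}k_i=O(D)$ via a geometric argument is not quite right, since the $k_i$ can stay small (e.g., equal to $1$) for up to $\Theta(\log L)$ rounds before growing, giving $\sum k_i=O(D+\log L)$ rather than $O(D)$; the paper instead simply bounds each $k_i\le D$ to get $\sum k_i\le i^*D\le D\log L$, but either way the curtailed-phase bound of $O(D\log L)$ stands.
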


\begin{proof}
	By the same argument as for Theorem \ref{alg:MB1}, each round of the main loop informs all nodes of the next bit in each message prefix. Therefore, after $\log M$ rounds we have performed multi-broadcast without provenance.
	
	We separate the running-time proof into four cases:
	\begin{enumerate}[\bf (1) ]
		\item $k \le D$ and $k< M$;
		\item $k \le D$ and $k\ge M$;
		\item $k>D$ and $k< M$;
		\item $k>D$ and $k \ge M$.
	\end{enumerate}
	
	\begin{description}
		\item[Case 1: $k \le D$ and $k< M$.]
		
		For the $k \le D$ case, the number of unique $i$-bit source ID prefixes $k_i$ will never exceed $D$ (since it is bounded above by $k$), and so the all to \textsc{Multi-Broadcast With Provenance} will successfully perform multi-broadcast (with provenance, and therefore also without) in $O(k \log \frac{LM}{k} + D \log L) = O(k \log {L} + k \log \frac{M}{k} + D \log L) = O(k \log \frac Mk+D \log L)$ time-steps.
		
		\item[Case 2: $k \le D$ and $k\ge M$.]
		
		As above, the call to \textsc{Multi-Broadcast With Provenance} will successfully perform multi-broadcast in $O(k \log \frac{LM}{k} + D \log L) = O(k \log {L} + D \log L) = O(D \log L)$ time-steps.
		
		\item[Case 3: $k>D$ and $k< M$.]
		
		Since $k>D$, the call will not complete multi-broadcast, but its ``set-up'' phase will provide a leader $v$ and knowledge of $D$ and $\log M$, so these steps are not duplicated in our description of Algorithm \ref{alg:MB2}. Each round of the main loop then informs every node of the next bit in each unique message prefix, and so after $\log M$ rounds we are done.
		
		Let $t$ be the round of the loop at which the call to \textsc{Multi-Broadcast With Provenance} terminates. Running time for the call is then bounded above (for some constant $c$) by
		\begin{align*}
		c D \log L + \sum_{i=1}^{t} c (D+ k_i)
		&\le
		c D \log L + \sum_{i=1}^{t} 2cD
		\\
		&\le
		c D \log L + \sum_{i=1}^{\log L} 2cD
		\\
		&=
		3 c D \log L
		=
		O(D \log L)
		\enspace,
		\end{align*}
		where the first inequality is due to the fact that $k_i \le D$ until termination.
		
		Running time for the main loop of Algorithm \ref{alg:MB2} is bounded above (again for some constant $c$) by:
		\begin{align*}
		\sum_{i=1}^{\log M} c (D + \tilde{k}_i)
		& =
		c D \log M + c \left(\sum_{i=1}^{\log k} \tilde{k}_i + \sum_{i=\log k+1}^{\log M} \tilde{k}_i\right)
		\\
		&\le
		c D \log M + c \left(\sum_{i=1}^{\log k} 2^{i-1} + \sum_{i=\log k+1}^{\log M} k\right)
		\\
		& \le
		c D \log M + c(k + k(\log M - \log k))
		\\
		&=
		O(D \log M + k \log \frac{M}{k})
		\enspace.
		\end{align*}
		
		Total time is therefore
		\begin{align*}
		O(D \log L + D \log M + k \log \frac{M}{k})
		&=
		O(D \log L + D \log \frac{M}{k} + D \log k + k \log \frac{M}{k})
		\\
		&=
		O(k \log \frac{M}{k} + D \log L)
		\enspace,
		\end{align*}
		where the last expression holds since $D \log k \le D \log L$ and $D \log \frac Mk \le k \log \frac Mk$.
		
		\item[Case 4: $k>D$ and $k \ge M$.]
		
		The call to \textsc{Multi-Broadcast With Provenance} will fail and take $O(D \log L)$ time as before. Running time for the main loop of Algorithm \ref{alg:MB2} is now bounded by:
		
		\begin{align*}
		\sum_{i=1}^{\log M} c (D + \tilde{k}_i)
		& =
		c D \log M + c \sum_{i=1}^{\log M} \tilde{k}_i
		\le
		c D \log M + c \sum_{i=1}^{\log M} 2^{i-1}\\
		&\le
		cD \log M + cM
		=
		O(D \log M + M)
		\enspace.
		\end{align*}
		
		Since $M\le k\le L$, total running time is $O(M + D \log L)$.
	\end{description}
	
	\paragraph*{Combining cases:}
	
	When $M>k$ total running time is $O(k\log\frac Mk + D \log L)$, and when $M\le k$, total running time is $O(M + D \log L)$.
\end{proof}

It may seem nonintuitive that Algorithm \ref{alg:MB2} achieves multi-broadcast in fewer then the $k \log M$ time-steps required for a single node to directly transmit or hear the messages, since this might seem to be a natural lower bound. The improvement stems from implicit compression of the messages within the algorithm's method.

\subsection{Faster Non-Explicit Multi-Broadcast}
A very recent result by Dufoulon, Burman, and Beauquier \cite{-DBB18} improves the running time for leader election in the beep model to an optimal $O(D+\log L)$. This allows them to slightly improve the running time of our explicit multi-broadcast with provenance algorithm (Algorithm \ref{alg:MB1}) to $O(k\log\frac{LM}{k} + D \min\{k,\log L\})$. However, it does not directly lead to significantly faster algorithms, because leader election was not a bottleneck in our analysis. In this section, we show how to exploit this improved leader election procedure to attain an optimal algorithm for multi-broadcast with provenance, and near-optimal for multi-broadcasting without provenance.

We take a different approach from Algorithms \ref{alg:MB1} and \ref{alg:MB2}, using a new type of \emph{superimposed code} to collect information. A superimposed code is a function which maps each element of its domain to a unique binary codeword, in such a way that information can be inferred from the binary \textbf{OR}-superimposition of a set of codewords. In our case, we define a $(k,X)$-\emph{choice superimposed code} which guarantees that, given the superimposition of any $k$ codewords, there are at most $O(k)$ codewords that could have been included in the superimposition (because all of the others have a \textbf 1 where the superimposition has a \textbf 0). 

We will say one binary string $a$ is \emph{dominated} by another string $b$ (denoted $a\preceq b$) if $a_i\leq b_i \forall i$. Our goal now is to show that any superimposition of $k$ codewords dominates at most $O(k)$ others.

\begin{definition}
	A \textbf{$(k,X)$-choice superimposed code of length $\ell$} is an injective function $C:X\rightarrow \{0,1\}^\ell$ such that for every set $K\subseteq X$ with $|K|:=k$, the size of the set $Y = \{u\in X:C(u)\preceq\bigvee_{v\in K}C(v)\}$ is at most $9k$. 
\end{definition}

This set $Y$ is the set of all codewords dominated by the superimposition. Note that while the definition specifies superimpositions of exactly $k$ codewords, the set $Y$ is also size $O(k)$ for any superimposition of fewer than $k$ codewords, since these can be arbitrarily extended to $k$ codewords without reducing the size of $Y$.

\begin{lemma}
	For any $k,X$ with $k\leq |X|$, there exists a $(k,X)$-choice superimposed code of length $9k\ln \frac {|X|}{k}$.
\end{lemma}

\begin{proof}
	We will prove the existence of such a code by a probabilistic method. The idea is that we prove the existence of some combinatorial object by randomly generating a candidate object, and then proving that it satisfies the required criteria with positive probability. Then, some such object must exist. The downside of this type of argument is that it is existential, i.e. does not tell us how to construct the object, and so algorithms making use of the object are non-explicit.
	
	Let $x=|X|$. We randomly generate a candidate code $C:X\rightarrow \{0,1\}^\ell$, by choosing each bit of each code-word independently to be: 
	\begin{itemize}
		\item $\textbf 1 \text{ with probability }\frac{1}{2k}\text{ and \textbf 0 otherwise, for the first }6k\ln \frac xk\text{ bits.}$\\
		\item \textbf 1$\text{ with probability }\frac{1}{2}\text{ and \textbf 0 otherwise, for the last }3k\ln \frac xk\text{ bits. }$
	\end{itemize}
	
	The last $3k\ln \frac xk\geq 3\ln x$ bits are solely to ensure that no two code-words are the same (i.e. $C$ is injective as required), which is the case since the probability that two particular codewords agree on those bits is at most $(\frac 12) ^ {3\ln x} \leq x^{-2.07}$. Taking a union bound over all $\binom x2\le \frac 12 x^2$ pairs of codewords, the probability that any two are the same is at most $\frac 12 x^2 \cdot x^{-2.07} \leq \frac 12$.
	
	For the rest of our analysis we consider only the first $6k\ln \frac xk$ bits. Fix some subset $K\subseteq X$ of size $k$. Clearly for all $u\in K$, $C(u)\preceq\bigvee_{v\in K}C(v)$. The probability that any particular bit $\bigvee_{v\in K}C(v)_i$ is \textbf 0 is at least 
	\[\prod_{v\in K} \Prob{C(v)_i = \textbf 0} \geq \prod_{v\in K} (1-\frac{1}{2k})\geq \prod_{v\in K} 4^{-\frac{1}{2k}} = 4^{-\frac{k}{2k}}= \frac 12\enspace.\]
	
	We can then show that any codeword not in $K$ is unlikely to be dominated by $K$'s superimposition. For any $u\notin K$, the probability that $C(u)\preceq\bigvee_{v\in K}C(v)$ is at most
	\begin{align*}
	\prod_{i\in [\ell]} \Prob{\lnot \left(C(u)_i = \textbf 1\land \bigvee_{v\in K}C(v)_i = \textbf 0\right)} &\leq \prod_{i\in [\ell]} (1-(\frac {1}{2k}\cdot \frac 12))\\
	&\leq \prod_{i\in [\ell]} e^{\frac {-1}{4k}} = e^{\frac{-\ell}{4k}}
	\end{align*}
	
	So, the probability that $|Y\setminus K|\geq 8k$ (i.e. $|Y|\geq 9k$) is at most:
	
	\[\binom{X}{8k}  (e^{\frac{-\ell}{4k}})^{8k} \leq \left(\frac{ex}{8k}\right)^{8k}e^{-2\ell}\leq e^{8k\ln \frac xk - 2\ell}\leq e^{-4k\ln\frac xk}\]
	
	There are at most 
	$\binom xk \leq e^{2\ln\frac xk}$ 
	possible sets $K$, and by a union bound over all of them, the probability some set $K$ does not satisfy the condition is at  most $e^{-2k\ln\frac xk}$. By another union bound, the probability that the codewords are not unique or the condition is not satisfied is at most $e^{-2k\ln\frac xk}+\frac 12 < 1$. Since there is a non-zero probability that $C$ is a valid $(k,X)$-choice superimposed code, such a code must exist.
\end{proof}

We now describe how choice superimposed codes can be used for multi-broadcast, in Algorithm \ref{alg:MB3}. 

We perform the same `set-up' phase as in Algorithms \ref{alg:MB1} and \ref{alg:MB2}, electing a leader and obtaining knowledge of diameter $D$ and message length $\log M$. Using the leader election algorithm of \cite{-DBB18}, though, this only requires $O(D+\log L + \log M)$ time. 

Next, we repeatedly perform \emph{rounds} in which we attempt to collect the source messages, encoded using choice superimposed codes. The rounds have a parameter $j$ which doubles each time, starting at a value such that $j\log \frac {M}{j} = D$ (since the rounds will have running time $\Theta(D+j\log \frac {M}{j})$, and we wish to start with the two factors equal). For each round, let $C_j$ be a $(j,[M])$-choice superimposed code of length $9j\ln\frac Mj$. We perform $\textsc{CollectMessages}(v,S,C_j(m(S)), 9j\ln \frac {M}{j})$ and broadcast the resulting string using \textsc{Beep-Wave}. By the properties of choice superimposed codes, if we have $j\geq k$, then the size of the set $Y$ of dominated codewords is at most $9j$. 

When this is the case, we proceed to a final call of \textsc{CollectMessages}. Each source node creates a string $\tilde{m}_{S}$ of length $|Y|$, where the $b^{th}$ bit of the string corresponds to the $b^{th}$ codeword in $Y$ (in lexicographical order). It sets the bit corresponding to the codeword it used for its message to \textbf1, and all others to \textbf0. \textsc{CollectMessages}, performed on these strings and re-broadcast, then informs all nodes of the codewords (and hence the source messages) in use.

\begin{algorithm}[h]
	\caption{\textsc{Non-Explicit Multi-Broadcast}$(S,m(S))$}
	\label{alg:MB3}
	\begin{algorithmic}
		\State $v \gets \textsc{ElectLeader}$
		\State $D \gets \textsc{EstimateDiameter}(v)$
		\State $\log M \gets \textsc{GetMessageLength}(v,S,m(S))$
		\State Let $j$ satisfy $j\log \frac Mj = D$
		\Repeat 
		\State $Z_j \gets \textsc{CollectMessages}(v,S,C_j(m(S)), 9j\ln \frac {M}{j})$
		\State perform \textsc{Beep-Wave}$(v,Z_j)$
		\State $i \gets |Y|$
		\State $j\gets 2j$
		\Until {$i\leq 9j$}
		\State $\tilde{m} \gets \textsc{CollectMessages}(v,S,\tilde{m}_{S}, i)$
		\State perform \textsc{Beep-Wave}$(v,\tilde{m})$
	\end{algorithmic}
\end{algorithm}

\begin{theorem}
	\label{the:MB3}
	Algorithm \ref{alg:MB3} correctly performs multi-broadcast without provenance within $O(k \log \frac{M}{k} + D+\log L)$ time-steps if $k<M$, and $O(M + D+\log L)$ time-steps if $k \ge M$.
\end{theorem}
\begin{proof}
	The three sub-procedure calls in initial ``set-up'' phase take a total of $O(D + \log L + \log M)$ time-steps, and provide a leader node and knowledge of $D$ and $\log M$.
	
	A round of the algorithm's loop with parameter $j$ takes $O(D+j\log \frac {M}{j})$ time, since it consists of performing \textsc{CollectMessages} on strings of length $O(j\log \frac {M}{j})$, and then \textsc{Beep-Wave} on a string of the same length. The loop terminates when $j\leq k$ (assuming $k\leq \frac M2$; if $k\geq \frac M2$ it terminates when $j\leq \frac M2$, since $|Y|\leq M$). 
	
	Let $j'$ be the initial value of $j$, i.e. $j'\log \frac {M}{j'} = D$. If $j'\geq \min{2k,\frac M4}$, then 
	
	We analyze running time of the loop and final \textsc{CollectMessages} call, separating into three cases: 
	\begin{enumerate}[\bf (1) ]
		\item $j'\geq \min{k,\frac M2}$;
		\item $j'\leq k<\frac M2$;
		\item $j'\leq \frac M2\leq k$;
	\end{enumerate}
	
	\begin{description}
		\item[Case 1: $j'\geq \min{k,\frac M2}$.]
		
		The loop terminates after the first round, taking $O(j'\log \frac {M}{j'}) = O(D)$ time. The final \textsc{CollectMessages} call takes $O(j') = O(D)$ time.
		
		\item[Case 2: $j'\leq k<\frac M2$.]
		
		Total running time of the loop is at most
		\begin{align*}
		\cj\sum\limits_{q=\log j'}^{\log k}&\left(D+2^q\log \frac {M}{2^q}\right)
		\leq \cj(D\log\frac{k}{j'} +\sum\limits_{q=1}^{\log k}(2^q\log M-q2^q))\\
		&\leq \cj(j'\log \frac {M}{j'}\log\frac{k}{j'} + 2^{\log k + 1}\log M - (\log k -1) 2^{\log k + 1})\\
		&\leq \cj(k\log \frac {M}{k} + \log\frac{2M}{k}2^{\log k + 1})\\
		&\leq 5\cj k \log \frac Mk\enspace,
		\end{align*}
		for some constant $\cj$.
		
		The final \textsc{CollectMessages} call takes $O(k)$ time.
		
		\item[Case 3: $j'\leq \frac M2\leq k$.]
		
		Total running time of the loop is at most
		\begin{align*}
		\cj\sum\limits_{q=\log j'}^{\log \frac M2}&\left(D+2^q\log \frac {M}{2^q}\right)
		\leq \cj(D\log\frac{M}{2j'} +\sum\limits_{q=1}^{\log \frac M2}(2^q\log M-q2^q))\\
		&\leq \cj(j'\log \frac {M}{j'}\log\frac{M}{2j'} + 2^{\log M}\log M - (\log M -2) 2^{\log M})\\
		&\leq \cj(\frac M2 + 2M)\\
		&\leq 3\cj M\enspace,
		\end{align*}
		for some constant $\cj$.
		
		The final \textsc{CollectMessages} call takes $O(M)$ time.
		
	\end{description}
	
	Combining these cases, we can see that Algorithm \ref{alg:MB3} performs multi-broadcast without provenance within $O(k \log \frac{M}{k} + D+\log L)$ time-steps if $k<\frac M2$, and $O(M + D+\log L)$ time-steps if $k \ge \frac M2$.
\end{proof}

We can use the same algorithm to perform multi-broadcast with provenance, simply by having each node $v$ append its ID to its source message $m(v)$. Then, messages are drawn from the set $[L]\times[M]$ of size $LM$. Replacing $M$ by $LM$ in the statement and proof of Theorem \ref{the:MB3} gives the following:

\begin{theorem}\label{the:MB4}
	Algorithm \ref{alg:MB3} correctly performs multi-broadcast with provenance within $O(k \log \frac{LM}{k} + D)$ time-steps. \qedhere
\end{theorem}
\section{Lower Bounds}

In this section we give lower bounds for the main communications tasks we have considered: broadcasting (in undirected and directed networks) and multi-broadcast. All of these lower bounds follow a similar approach: given $n$, $D$, $L$, $M$, and for multi-broadcasting $k$, we first fix a network $\net$ with $n$ nodes and diameter $D$. Then, we specify a distribution of input instances by choosing uniformly at random the identifier assignment $ID$ (from the set of all injective functions $[n]\rightarrow [L]$). Since we prove lower bounds against randomized algorithms, we will also assume that nodes have as input a random string $y$ drawn independently from some distribution $Y$. Finally, we choose the input message(s) $m$ at random. The distribution we will choose from depends upon the task for which we give a lower bound: 

\begin{itemize}
	\item For broadcasting we choose a single message $m$ uniformly at random from $[M]$;
	\item For multi-broadcasting without provenance we uniformly choose a size-$k$ subset of messages from $[M]$, which we will denote by $m\in \binom{[M]}{k}$;
	\item For multi-broadcasting with provenance we instead draw $m$ from the product of $k$ independent (possibly non-unique) messages with a size-$k$ subset of IDs, i.e. $m\in [M]^k\times \binom{[L]}{k}$. Source nodes use these IDs rather than those specified previously.
\end{itemize}

To encode node behavior, we will denote by $P^v_t \in \{\textbf B, \textbf L \}$ the behavior of a node $v$ at a time-step $t$, where $\textbf B$ means that $v$ beeps, and $\textbf L$ that it listens. We further denote $P^v_{\leq t}$ the sequence of $v$'s behavior up to time-step $t$.
We will also need to model what $v$ would hear upon listening, which we denote $Q^v_t\in \{\textbf H, \textbf S \}$, where $\textbf H$ means that $v$ would hear a beep (i.e. has a neighbor $u$ with $P^u_t = \textbf B$), and $\textbf S$ that it would hear silence. Likewise, we denote $Q^v_{\leq t}:= \{Q^v_{t'}\}_{t'\leq t}$.

We then note that $v$'s output after time-step $t$ must depend entirely on $ID(v)$, $y$, $Q^v_{\leq t}$, and if $v$ is a source, $m_v$. Our goal now will be to show that if insufficient time has passed, the probability that a node $v$'s output is correct will be $o(1)$. We do this by arguing that $ID(v)$ and $y$ are independent of $v$'s correct output, and that $Q^v_{\leq t}$ provides insufficient information to reliably recover this output.

\subsection{Undirected Networks}
We will first show lower bound for broadcasting and multi-broadcast in undirected networks. The bound for broadcasting will be derived as a special case of the multi-broadcasting bound, so we begin with multi-broadcast without provenance.

The network $\net$ we will use as a lower bound is the following: we place one node in each layer $1$ to $D$, and all other nodes in layer $0$. An edge will be present between nodes $u$ and $v$ if they are in consecutive layers, i.e. $|layer(u)-layer(v)|=1$. (Note that we assume here that $n-D\geq k$, but since we are concerned with asymptotic results, if this is not the case we can simply use $k'=\frac k2$ and $D'=\frac D2$ instead.)

As described above, we choose uniformly at random $y \in Y$, $ID\in [n]\rightarrow [L]$, and $ m\in \binom{[M]}{k}$ to generate our input distribution.

We show a lemma which states effectively that a node further than $t$ steps from the source nodes cannot receive any information about source messages before time-step $t$:

\begin{lemma}\label{lem:indm}
	For a time-step $t$, and for a node $v$ in layer $i$ with $i>t$, $P^v_{\leq t}$ is independent of $m$.
\end{lemma}

\begin{proof}
	We prove the claim by induction on $t$. Trivially it is true when $t=1$, since any node $v$ in layer $i>0$ is not a source, and $P^v_{0}$ is determined based only on $ID(v)$ and $y$, which are independent of $m$. 
	
	Assuming the claim is true for $t=j$, and proving for $t=j+1$, for a node $v$ in layer $i>j+1$, $P^v_{\leq j+1}$ is dependent entirely on $ID(v)$, $y$, and $Q^v_{\leq j}$. $Q^v_{\leq j}$ is dependent only on the values $P^u_{\leq j}$ for neighbors $u$ of $v$, and since these nodes are in layers at least $i-1>j$, these $P^u_{\leq j}$ values are also independent of $m$ by the inductive assumption.
\end{proof}

An easy corollary gives an $\Omega(D)$ lower bound:

\begin{corollary}\label{cor:indm}
	Any multi-broadcast algorithm running on $\net$ has $o(1)$ success probability, conditioned on it terminating in fewer than $T<D-1$ time-steps. Asymptotic behavior refers to when $M\rightarrow\infty$.
\end{corollary}

\begin{proof}
	Consider a node $v$ in layer $D$. The output of $v$ after time-step $T$ must depend entirely on $ID(v)$, $y$, and $Q^v_{\leq T}$. $Q^v_{\leq T}$ depends only on $P^u_{\leq T}$ for neighbors $u$ of $v$, and since these nodes are in layers at least $D-1>T$, by Lemma \ref{lem:indm} this is independent of $m$. So, since $m$ is chosen uniformly from $\binom{[M]}{k}$ independently of $v$'s output, the probability that the output is correct is at most $\binom{M}{k}^{-1}$. 
\end{proof}

We now show the $\Omega(k \log \frac M)$ term of the lower bound by arguing that if an algorithm terminates faster than this, $Q_{\leq T}$ contains insufficient information to correctly recover $m$:

\begin{lemma}\label{lem:indm}
	Any multi-broadcast algorithm running on $\net$ has $o(1)$ success probability, conditioned on it terminating in $T\leq\frac k2 \log \frac Mk$ time-steps.
\end{lemma}

\begin{proof}
	The output of any non-source node $v$ at time-step $T$ must depend entirely on $ID(v)$, $y$, and $Q_{\leq T}$. $ID(v)$ and $y$ are independent of $m$, and $Q_{\leq T}$ takes one of only $2^T\leq \left(\frac{M}{k}\right)^\frac k2$ values.
	
	For each $m$, let $q_m$ maximize $\Prob{OUTPUT_v = m|Q_{\leq T} = q_m}$. Note that $\sum_{m \in [M]} \Prob{OUTPUT_v = m|Q_{\leq T} = q_m} \leq\frac{M}{k}^\frac k2$, since each possible value of $Q_{\leq T}$ contributes at most $1$ in total. Then,
	
	\begin{align*}
	&\Prob{OUTPUT_v \text{ is correct}} = \binom{M}{k}^{-1} \sum\limits_{\textbf m\in\binom{[M]}{k}} \Prob{OUTPUT_v =\textbf m | m = \textbf m}\\
	&\hspace{1cm}\leq \binom{M}{k}^{-1} \sum\limits_{\textbf{m}\in\binom{[M]}{k}} \Prob{OUTPUT_v =\textbf{m} | m =\textbf{m}, Q_{\leq T} = q_\textbf{m}}\\
	&\hspace{1cm}\leq \binom{M}{k}^{-1}\left(\frac{M}{k}\right)^\frac k2\leq  \left(\frac{M}{k}\right)^{-k}\left(\frac{M}{k}\right)^\frac k2 = \left(\frac{M}{k}\right)^{-\frac k2} \enspace.
	\end{align*}
\end{proof}

Since these results were proven on the same input distribution, we can combine them:

\begin{theorem}
	\label{the:BLB}
	For $k<\frac M2$, any multi-broadcast without provenance algorithm running on $\net$ has $o(1)$ success probability, conditioned on it terminating within $\frac 14(D+k\log \frac Mk)$ time-steps. 
\end{theorem}

\begin{proof}
	From Corollary \ref{cor:indm} and Lemma \ref{lem:indm}, an algorithm has $o(1)$ success probability conditioned on it terminating within $\max\{D-1,\frac k2 \log \frac Mk\}\geq\frac 14(D+k\log \frac Mk)$ time-steps.
\end{proof}

With slight adjustments, we can also obtain a lower bound when $k\geq \frac M2$:

\begin{theorem}
	\label{the:BLB2}
	For $k\geq\frac M2$, any multi-broadcast without provenance algorithm running on $\net$ has $o(1)$ success probability, conditioned on it terminating within $\frac 18(D+M)$ time-steps.
\end{theorem}

\begin{proof}
	We follow the same lines as the proof of Theorem \ref{the:BLB}, but when specifying our input distribution we only randomly select messages for $k' = \frac M2$ of the sources (the rest we can choose arbitrarily, and in fact can assume are known by all nodes a priori). Then, we reach the same result as Theorem \ref{the:BLB} for for algorithms terminating within $\frac 14(D+k'\log \frac {M}{k'}) = \frac D4+\frac M8$ time-steps.
\end{proof}

We can also easily adapt for multi-broadcast with provenance:

\begin{theorem}
	\label{the:BLB3}
	Any multi-broadcast with provenance algorithm running on $\net$ has $o(1)$ success probability, conditioned on it terminating within $\frac 14(D+k\log \frac {LM}{k})$ time-steps. 
\end{theorem}

\begin{proof}
	We again follow the same lines as the proof of Theorem \ref{the:BLB}, but when specifying our input distribution we now take source inputs to the product of non-unique messages and unique IDs, i.e. $m$ is drawn uniformly at random from the set $[M]^k \times \binom{[L]}{k}$, which has size $M^k\cdot \binom Lk \geq \left(\frac{LM}{k}\right)^k$. We can then show analogously that conditioning on termination within $\frac k2\log \frac {LM}{k}$ time-steps, probability of correct output is at most $\left(\frac {LM}{k}\right)^{-\frac k2}$. The proof that $D-1$ time-steps are required is unchanged. So, an algorithm has $o(1)$ success probability conditioned on it terminating within $\max\{D-1,\frac k2\log \frac {LM}{k}\}\geq\frac 14(D+k\log \frac {LM}{k})$ time-steps.
\end{proof}

An asymptotically optimal lower bound for broadcasting is a special case of Theorem \ref{the:BLB}:

\begin{theorem}
	\label{the:BLB4}
	Any broadcasting algorithm running on $\net$ has $o(1)$ success probability, conditioned on it terminating within $\frac 14(D+\log M)$ time-steps. 
\end{theorem}

\begin{proof}
	Follows from Theorem \ref{the:BLB}, by setting $k=1$.
\end{proof}

\subsection{Directed Networks}

Next, we show that our algorithm for broadcasting in the directed beep model is also optimal.

Our network $\net$ will be as follows, given parameters $n$ and $D$, we again divide the nodes into $D+1$ \emph{layers}; this time layer $0$ contains only the source node $s$, layers $1$ to $D-1$ each contain a single non-source node, and layer $D$ contains all other nodes. Then we let a \emph{directed} edge $(u,v)$ be present in the network if $layer(u) \ge layer(v) - 1$, with the exception that we do not put edges between pairs of nodes in layer $D$. That is, a node has edges to the node in the next layer, and to all nodes in previous layers.

Consider a fixed broadcasting algorithm running on $\net$ for $t$ time-steps. We will denote variable sets $X^i_t$ to be the set of time-steps at most $t$ in which a node in layer $i$ beeps and no nodes in later layers do. We now show, in effect, that all information a node receives about the source message is contained within these sets:

\begin{lemma}\label{lem:indy}
	For time-step $t$, a layer $i$, and for any node $v$ in layer $j > i$, $P^v_{\leq t}$ is dependent entirely on $ID$, $y$, and $X^{i}_{t-1}$.
\end{lemma}

\begin{proof}
	We prove the claim by induction on $t$. The base case $t=0$ is trivially true, since all non-source nodes' input in the time-step $0$ is included in $ID$ and $y$, and so their choice to beep is also fully dependent on these.
	
	For the inductive step, assuming the claim is true for $t<t'$, we prove for $t'$. $P^v_{\leq t}$ is dependent entirely on $ID$, $y$, and $Q^v_{\leq t'-1}$. This latter term is dependent on the value of $P^u_{\leq t'-1}$ for all in-neighbors $u$ of $v$.
	
	If $i<j-1$,  these values are dependent entirely on $ID$, $y$, and $X^{i}_{t'-2}$ by the inductive assumption. All information in $X^{i}_{t'-2}$ is contained in $X^{i}_{t'-1}$, so the claim holds.
	
	If $i=j-1$, however, the inductive assumption cannot be applied to $P^w_{\leq t'-1}$, where $w$ is the in-neighbor of $v$ in layer $j-1$. In this case, $P^w_{\leq t'-1}$ can be determined entirely from $X^{i}_{t'-1}$ and the values $P^u_{\leq t'-1}$ for all nodes $u$ in layers at least $j$. By the inductive assumption, these values $P^u_{\leq t'-1}$, are dependent entirely upon $ID$, $y$, and $X^{i}_{t'-1}$, hence so are $P^w_{\leq t'-1}$ and $P^v_{\leq t'}$.
\end{proof}

We can node prove our lower bound by arguing that if running time is too short, these sets $X^i_t$ contain insufficient information to recover $m$:

\begin{theorem}
	Any algorithm for broadcasting in directed networks which runs in $o(\frac{D\log M}{\log D})$ expected time has $o(1)$ success probability.
\end{theorem}

\begin{proof}
	We consider a node $v$ in layer $D$. Let $\cj\geq 3$ be an arbitrarily large constant. By Lemma \ref{lem:indy}, the output of such a node $v$ after $T\leq\frac{D\log M}{\cj^2\log D}$ time-steps can be expressed as a function of $ID$, $y$ (both of which independent of $m$), and $X^{i}_{T-1}$, for any $i<D$. We will denote random variable $x^i=|X^{i}_{T-1}|$.
	
	For each $m\in [M]$, let $Xmax^{i}_\textbf m$ be the value of $X^{i}_{T-1}$ which maximizes \[\prob{OUTPUT_v=\textbf m|X^{i}_{T-1}=Xmax^{i}_\textbf m, m=\textbf m}\] subject to $x^i\leq \frac{\log M}{\cj\log D}$. 
	
	There are at most
	\[\sum_{x^i =1}^{\frac{\log M}{\cj\log D}}\binom{T}{x^i}
	\leq \frac{\log M}{\cj\log D} \left(\frac{Te}{\frac{\log M}{\cj\log D}}\right)^\frac{\log M}{\cj\log D} =
	\frac{\log M}{\cj\log D} \left(\frac{eD}{\cj}\right)^\frac{\log M}{\cj\log D}
	\leq 2^\frac{\log M}{\cj}
	= M^\frac 1\cj
	\]
	
	possible values of $X^{i}_{t-1}$ with $x^i\leq \frac{\log M}{\cj\log D}$. Therefore, \[\sum_{\textbf m\in [M]}\prob{OUTPUT_v=\textbf m|X^{i}_{T-1}=Xmax^{i}_\textbf m, m=\textbf m, x^i\leq \frac{\log M}{\cj\log D}} \leq M^\frac 1\cj\enspace,\]since each possible value of $X^{i}_{T-1}$ contributes at most $1$ in total to the sum.
	
	Then, for any $i$, 
	\begin{align*}
	&\prob{OUTPUT_v\text{ is correct}|x^i\leq \frac{\log M}{\cj\log D}}\\ &\hspace{5mm}\leq \frac 1M \sum_{\textbf m\in [M]}\prob{OUTPUT_v=\textbf m|m=\textbf m, x^i\leq \frac{\log M}{\cj\log D}}\\
	&\hspace{5mm}\leq \frac 1M \sum_{\textbf m\in [M]}\prob{OUTPUT_v=\textbf m|X^{i}_{T-1}=Xmax^{i}_\textbf m,m=\textbf m, x^i\leq \frac{\log M}{\cj\log D}}\\
	&\hspace{5mm}\leq \frac{M^\frac 1\cj}{M} = M^{\frac 1\cj-1}\enspace.
	\end{align*}
	
	Now assume for the sake of contradiction that a broadcasting algorithm finishes within expected time $\Exp{T}\leq\frac{D\log M}{\cj^2\log D}$ and succeeds with probability $p\geq \frac 2\cj$. Then, for all $i<D$ we must have that $\Prob{x^i> \frac{\log M}{\cj\log D}}\geq p-M^{\frac 1\cj-1}> \frac p2$, and so $\Exp{x^i}> \frac{p\log M}{2\cj\log D}$. Since the sets $X^i_t$ are disjoint, $\sum_{i<D}x^i\leq T$, and therefore \[\Exp{T} > \frac{pD\log M}{2\cj\log D}=\frac{D\log M}{\cj^2\log D}\enspace.\]
	
	This is a contradiction, and so it must be the case that $\Exp{T} > \frac{D\log M}{\cj^2\log D}$. Since $c$ is arbitrarily large, we have shown that any algorithm with $\Omega(1)$ success probability must take $\Omega(\frac{D\log M}{\log D})$ expected time, and conversely, any algorithm taking  $o(\frac{D\log M}{\log D})$ expected time has $o(1)$ success probability.
\end{proof}

\section{Discussion and Open Problems}
Models for networks of very weak devices, such as the beep model, are growing in popularity as such devices become cheaper and more commercially viable; examples of their use include sensor networks and RFID tagging. Our aim here is to provide the first systematic study of algorithms for global tasks in such a model. Our running times are mostly optimal, with the only major grounds for improvement being an optimal \emph{explicit} multi-broadcast algorithm. However, there are several other interesting aspects of the beep model which could merit further research.

One crucial concern in networks of this type is that \emph{energy} is often highly constrained: we may wish to minimize the amount of times nodes transmit (and possibly even listen; we could introduce a third option of `do nothing' in a time-step). A study of how little energy is required to complete communication tasks in the beep model would be interesting.

Another research direction is further weaken the assumptions of the model, in order to make it as widely applicable as possible. The major assumption remaining is that time-steps are \emph{synchronous}, i.e. that nodes local clocks all `tick' at the same rate and beeps are heard immediately. There are several possible ways of modeling \emph{asynchronicity}, and exploring what can be done in asynchronous beeping networks. One work in this direction is \cite{-HP16}.

\newcommand{\Proc}{Proceedings of the\xspace}
\newcommand{\SODA}{Annual ACM-SIAM Symposium on Discrete Algorithms (SODA)}
\newcommand{\PODC}{Annual ACM Symposium on Principles of Distributed Computing (PODC)}
\newcommand{\DISC}{International Symposium on Distributed Computing (DISC)}
\newcommand{\JALGORITHMS}{Journal of Algorithms}

\bibliography{beep}{}

\end{document}